\documentclass[conference]{article}

\makeatletter

\usepackage{blindtext}
\usepackage{eso-pic}
\usepackage{cite}
\usepackage{amsmath,amssymb,amsfonts}
\usepackage{amsthm}
\usepackage{mathabx}
\usepackage{algorithm}
\usepackage{algorithmic}
\usepackage{graphicx}
\usepackage{textcomp}
\usepackage{xcolor}

\def\BibTeX{{\rm B\kern-.05em{\sc i\kern-.025em b}\kern-.08em
    T\kern-.1667em\lower.7ex\hbox{E}\kern-.125emX}}
    
\usepackage{eso-pic}

\usepackage{indentfirst}

\usepackage{fullpage}
\usepackage{amsthm}
\usepackage{multirow}

\usepackage{pifont}
\usepackage{comment}

\theoremstyle{definition}

\newtheorem{remark}{Remark}
\newtheorem{theorem}{\textbf{Theorem}}

\newcommand{\B}{\mathcal B}

\date{May 11, 2025}

\begin{document}

\title{\vspace*{1cm} Residue Number System Comparison revisited, a software perspective
%{\footnotesize \textsuperscript{*}Note: Sub-titles are not captured in Xplore and should not be used}
\thanks{This work has been partially funded by the AID, project 2022151.}%Identify applicable funding agency here. If none, delete this.}
}
%\author{\IEEEauthorblockN{ Laurent-Stéphane Didier, Léa Glandus, Jean-Marc Robert}
%\IEEEauthorblockA{\textit{Laboratoire IMath} \\
%\and
%\IEEEauthorblockN{Nadia El Mrabet}
%\IEEEauthorblockA{\textit{Centre CMP, Department SAS} \\ %TODO check
%\textit{Mines Saint-Etienne,CEA-LETI}\\
%Saint-Etienne, France \\
%nadia.el-mrabet@emse.fr}
%\and
%\IEEEauthorblockN{\textsuperscript{}}
%\IEEEauthorblockA{\textit{} \\
%\textit{}\\
%\\
%}
%\and
%\IEEEauthorblockN{3\textsuperscript{rd} Léa Glandus}
%\IEEEauthorblockA{\textit{Laboratoire IMath} \\
%\textit{Université de Toulon}\\
%Toulon, France \\
%lea.glandus@univ-tln.fr}
%\and
%\IEEEauthorblockN{4\textsuperscript{th} Jean-Marc Robert}
%\IEEEauthorblockA{\textit{Laboratoire IMath} \\
%\textit{Université de Toulon}\\
%Toulon, France \\
%jean-marc.robert@univ-tln.fr}

%}

\author{Laurent-Stéphane Didier\\
\textit{Université de Toulon}\\
Toulon, France \\
laurent-stephane.didier@univ-tln.fr
\and
Nadia El Mrabet\\
Centre CMP, Department SAS \\ %TODO check
\textit{Mines Saint-Etienne,CEA-LETI}\\
Saint-Etienne, France \\
nadia.el-mrabet@emse.fr
\and
Léa Glandus\\
\textit{Université de Toulon}\\
Toulon, France \\
lea.glandus@univ-tln.fr
\and
Jean-Marc Robert\\
\textit{Université de Toulon}\\
Toulon, France \\
jean-marc.robert@univ-tln.fr}

\maketitle
%\conf{\textit{  V. International Conference on Electrical, Computer and Energy Technologies (ICECET 2025) \\ 
%	3-6 July 2025, Paris-France}}
\begin{abstract}
This paper presents a novel method to compare two numbers in Residue Number System (RNS) using an additional modulus, which is often already available because it is required in modular computations and digital signal processing scaling.

Our method provides the comparison of two integers in the full range of the RNS base. It does not require moduli of a special form, unlike other state-of-the-art methods that are restricted to specific RNS bases or require bounds on input numbers.
Our approach only requires one single conversion to a mixed radix representation with a complexity of $\mathcal O(n^2)$, which can be reduced to $\mathcal{O}(\log(n))$ in time with parallelization. This provides a significant advantage over classical methods and more recent competitive methods which work under restrictions.
This opens perspectives for advancements in challenging RNS operations such as division, scaling, and cryptographic applications.

%We also present software implementations making use of SIMD instruction set. In case of n+1 modulus base, these operations present O(n) complexity in time (O(n\^\ 2) in operations). This leads to RNS division in O(n)log(n) in time and O(n\^\ 2)log(n).

%This document is a model and instructions for \LaTeX.
%This and the IEEEtran.cls file define the components of your paper [title, text, heads, etc.]. *CRITICAL: Do Not Use Symbols, Special Characters, Footnotes, or Math in Paper Title or Abstract.
\end{abstract}

%\copyrightnotice{XXX-X-XXXX-XXXX-X/XX/\$XX.00 ©20XX IEEE}

%\begin{keywords}
%Residue number system, comparison, Mixed-Radix, base-extension
%\end{keywords}

\section{Introduction}

This paper deals with comparison in Residue Number Systems (RNS), which is a costly operation that usually requires conversions to a positional system suitable to perform the comparison itself. 

RNS is a non-positional number system, that was described in 1959 by Garner \cite{garn59},  where operations such as additions, subtractions, and multiplications can be performed in parallel on independent residues (see Knuth \cite{knuth2014art}). A $n$-channel RNS base consists in a set of $n$ relatively prime residues. The dynamic range of such systems is the product of the moduli.

Such number systems mainly target signal processing and cryptographic applications (see Molahosseini \textit{et al.}\cite{molahosseini2017embedded}).
However, because of their non-positional nature, the comparison of two RNS numbers is costly. Furthermore, the division operation also remains costly, as it requires one to know the magnitude of both operands.

Several contributions addressed the comparison problem. In general, comparing numbers in RNS requires a conversion to a positional system such as Mixed Radix System (MRS) or binary. After this conversion of each number, the comparison is done by comparing each pair of digits in the lexicographic order. Almost all previously published techniques either are based on Mixed Radix conversion either make use of the Chinese Remainder Theorem (CRT) or the Core function (see Ananda) \cite{MOHAN2017}. 

The advantage of MRS conversion is that it involves computation on data that have the size of RNS moduli (see Szabo and Tanaka\cite{szabo-tanaka67}). However, the computation of the MRS digits has a quadratic complexity in operation, though it can be implemented in a parallel to execute in linear time.

The techniques based on the CRT formula require costly modular operations on large numbers having the magnitude of the RNS base. The use of specific bases can significantly reduce this cost in hardware implementations (see Zhang \textit{et al.}, Parhami, Wang \textit{et al.} and Mohan.\cite{zsy93, parhami93, wang1999new, mohan2016residue}), but are not suitable for large RNS bases that are used in cryptographic applications. 

In the implementation of the RNS division, some adaptations of the CRT formula have been made in order to obtain the magnitude of the operands. For their division, Lu and Chiang presented in \cite{lu1992novel} a method based on parity checking, which uses lookup tables. Unfortunately, this method is not practicable for large RNS bases. Similarly, Sousa proposed a comparison scheme also based on parity checking which is simplified by the use of a specific RNS base \cite{sousa2007efficient}.

Approximate techniques for computing the CRT have been proposed in division methods (see Posch \textit{et al.} and Bajar \textit{et al.} \cite{posch1996division, Bajard1998}). However, because of the rounding, they are not adapted to catch the magnitude of small values. Kawamura \textit{et al.} proposed a comparison method suitable for computing the final step of the modular multiplication of large RNS numbers \cite{Kawamura2000}. Recently, a comparison method based on the approximate computing of the CRT formula has been published (see Xiao \textit{et al.} \cite{xiao2016algorithms}). However, similarly to the methods of \cite{posch1996division, Bajard1998}, it allows only the comparison of numbers that are large enough.

An other way to compare numbers is to use  mapping functions from the RNS to binary that are different from the CRT formula. The Core functions proposed by Akushkii \textit{et al.} \cite{akushskii1977new} are such mapping functions. However, they require modular computations on very large numbers and some numbers cannot be used for comparison (see Miller \textit{et al.}\cite{miller1986analysis}). 
Dimauro \textit{et al.} developed a comparison scheme based on another mapping function called Diagonal function \cite{dimauro1993new}. The modular computations make this approach not suitable for cryptographic-size RNS numbers.

\paragraph*{Contributions}

We target software implementations of cryptographic applications which require computations with large numbers of size ranging around several thousand bits. In such a context, the most frequent operation is the modular multiplication that requires to deal with two different RNS bases (see Bajard \textit{et al.}\cite{Bajard2001}).

%{\color{red} Software aspect, based on base extension, cryptographic size, modular multiplication that requires two bases, crypto homomorphique}

In this paper, we revisit the comparison for large RNS numbers and present an approach which requires only one conversion instead of two in the classical methods of Szabo and Tanaka in \cite{szabo-tanaka67}. This is possible when we already have the representation of both numbers in both bases, and one of the modulus of the second base can be used as a redundant modulus for comparison purpose. Thus, we assume that the redundant modulus is readily available.

The comparison algorithm we propose works for all couple of numbers in the range of the RNS base, denoted $M$:

$$\forall N_1,N_2 \mbox{~such that~} 0\leq N_1,N_2<M$$

Our algorithm returns:

$$\mbox{True if~}N_1\geq N_2$$
or
$$\mbox{False if~}N_1<N_2$$

It requires a redundant modulus and uses a mixed radix extension to this modulus in order to provide the comparison. Furthermore, our approach is generic in the sense that there is no condition on the RNS base: no special form of the moduli, no restrictions in the modulus number, no bounds on the input values $N_1$ and $N_2$ insofar as they remain in the RNS dynamic range (i.e.  $0\leq N_1,N_2 <M$).

We provide a theorem with the proof of correctness of our algorithm and finally, a complexity evaluation which shows the competitiveness of our approach in our context.

We point out that our approach is designed to be, as far as possible, the most generalist one, and works whatever the size of the considered dynamic range of the RNS system.
%TODO si on veut
\paragraph*{Organization of the paper}
This paper is organized as follows : Section \ref{sec:RNS} reminds the RNS system and details base extension approaches, Section \ref{sec:compapp} presents our comparison method and its complexity, and a conclusion ends the paper.
 
\section{Residue Number Systems}
\label{sec:RNS}
In RNS, a number X is represented by its remainders $x_i= \left| X \right|_{m_i} = X \bmod m_i$. If the residues are relatively prime, the CRT proves that X is unique in [0, M[. We note $M=\prod_{i=1}^n m_i$.  The proof of this theorem gives a formula in order to compute $X$ with the remainders $x_i$:
\begin{equation}\label{eq_CRT}
    X = \left| \sum_{i=1}^n \left| x_i\times M^{-1}_i \right|_{m_i} \times M_i \right|_M
\end{equation}
where $M_i=M/m_i$ and $ \left| M^{-1}_i \times M_i\right|_{m_i} =1$. The base of this Residue Number System is the set of the $n$ relatively prime moduli $ \B=\{m_1, \cdots, m_n\} $.

%\medskip
The main advantage of this number representation system is the parallel computations of additions, subtractions and multiplications on each channel independently. Considering $X$ and $Y$, one computes in RNS $Z\gets X\odot Y$ as follows:

\[
	z_i = x_i \odot y_i \bmod m_i \mbox{ where } \odot \in\{+, -, \times\}
\]

However, the non-positional characteristic of the RNS representation renders rather difficult comparison, division or scaling operations.

This motivates the numerous works around comparison algorithms, which is the subject of the present work.

%TODO à finir, présentation de sopérations, et explications de la difficultés des opérations telles que division, comparaison, scaling...

\subsection{Base extensions}
\label{ssec:BE}
The base extension consists in converting an RNS number from an RNS base to another. This operation requires to convert the RNS number into a positional representation and to compute the residues of this value in the new RNS base.  There are mainly two families of base extension:

\begin{itemize}
    \item using a forward conversion to a Mixed Radix representation (MRC) and a backward computation of the value of this representation modulo each modulus of the destination base. 
    \item approaches based on the Chinese Remainders Theorem (CRT) formula.
\end{itemize}

\subsubsection{Mixed Radix technique}

This method was presented by Szabo and Tanaka in \cite{szabo-tanaka67}. It consists in converting the RNS representation into a positional number system using the partial products of moduli as follows:

\begin{equation}
    \label{eq:MRConv}
    \begin{array}{lcl}X &=& a_1 + a_2\cdot m_1 + a_3\cdot m_1 \cdot m_2 + \hdots\\
			 && + a_{n-1} \cdot m_1 \cdot \hdots\cdot m_{n-2}\\ &&+ a_n \cdot m_1 \cdot \hdots \cdot m_{n-1}
	\end{array}
\end{equation}

We denote $m_{j,i}^{-1}$ the inverse of $m_j$ $\bmod~m_i$, that is $m_j\cdot m_{j,i}^{-1}\equiv 1 \bmod m_i$.

The mixed-radix digits $a_i$ are computed as follows:

$$\left\{\begin{array}{lcl}a_1 &=& x_1 \bmod m_1\\
                    a_2 &=& (x_2-a_1)\cdot m_{1,2}^{-1} \bmod m_2\\
                    a_3 &=& ((x_3-a_1)\cdot m_{1,3}^{-1}-a_2)\cdot m_{2,3}^{-1} \bmod m_3\\
                    \vdots\\
                    a_n &=& (\cdots(x_n-a_1)\cdot m_{1,n}^{-1}-a_2)\cdot m_{2,n}^{-1})-\cdots\\&&-a_{n-1})\cdot m_{n-1,n}^{-1} \bmod m_n
                    
	\end{array}\right.$$

The main advantage of this technique is that there is no need for large integer computations and that the value of $X$ is obtained without correction to deal with the dynamic range of the RNS system.

Another characteristic is that each digit is bounded, i.e. $0\leq a_i<m_j$. This guarantees that one has for sure $0\leq X<M$.

In terms of complexity, this approach requires $n(n-1)/2$ word size multiplications and memorized constants. However, there are some parallelization techniques that allow $\mathcal O(\log(n))$ execution time in some cases (see, for example, Huang in \cite{Huang83}).

\subsubsection{CRT based approaches}

This class of techniques retrieves $X$ with the equation (\ref{eq_CRT}). The main drawback is that we need to compute $X$ as a sum modulo $M$. This operation is very costly because $M$ is a large number.

%as follows:
%\begin{eqnarray} X= \sum_{i=1}^{n} |x_i\cdot y_i|_{m_i}\cdot M_i \mod M,\end{eqnarray}

%where $M = \prod_{i=1}^{n} m_i$,  $M_i = M/m_i$ et $y_i = M_i^{-1} \mod m_i$, pour $1 \leq i \leq n$.

Thus, by denoting $Y = \sum_{i=1}^n \left| x_i\times M^{-1}_i \right|_{m_i} \times M_i$, this is equivalent to find $k$ such that
%\sum_{i=1}^{n}x_i\cdot M_i\cdot y_i
\begin{equation}\label{eq:CRT}
    X=Y-k\cdot M
\end{equation}

Different methods have been proposed to compute $k$:

\begin{itemize}
    \item Posch and Posch in \cite{PoschP95} suggest performing a floating-point division of approximate values ($\pm 1$). Thus, this method can not be used when the exact value of $k$ is needed.% However, this method provides an approximate $k$, and does not fit in our case since we need an exact value.
    \item Kawamura \emph{et al.} in \cite{Kawamura2000} propose another method also computing an approximate value of $k$, trading the division of the previous approach by shiftings and multiplications by small constants. Like the previous method, there is a bound on $X$ beyond which the estimation of $k$ is not correct, by $1$. This bound depends on a special parameter $\alpha$ introduced by the authors ($0\leq \alpha <1$), and can be expressed as $(1-\alpha)\cdot M$, and this parameter is set according to the base extension. On average, the authors suggest taking $\alpha = 0.5$, which is feasible in case of use in successive modular reduction, but not in case when the exact reduction $\bmod M$ is required.% This method does not fit in our case again, since no guarantee is given that the obtained value is the exact one. %{\color{red} phrase précédente inadaptée...}
    \item Shenoy and Kumaresan in \cite{shenoyScaling1989,ShenoyK89} suggest using a redundant modulus $m_r$ in order to compute $k$ as follows:

    $$k = \left|\left(\sum_{i=1}^{n} \left||x_i\cdot y_i|_{m_i}\cdot M_i\right|_{m_r}\right) -x_r\right|_{m_r}$$

    provided $m_r$ is large enough.

%{\color{red} phrase suivante inadaptée}

    One issue with this approach is that it is necessary to guarantee 

    $$x_r = X \bmod m_r$$

    in any case. Otherwise, one may meet a situation in which the evaluation of $k$ is erroneous.
    
    %In our case, we need to determine the real value of $x_r$. 
    Indeed, if we need to extend the difference of two numbers $N_1$ and $N_2$ modulo $M$, with $0\leq N_1,N_2<M$, and if $N_1<N_2$, the base extension requires $x_r = |M +N_1-N_2|_{m_r}$. Because we will use $x_r = ||N_1|_{m_r}-|N_2|_{m_r}|_{m_r} \neq |M +N_1-N_2|_{m_r}$, the estimation of $k$ will not be correct.
    
\end{itemize}

%%%%%%%%%%%%%%%%%%%%%%%%%%%%%%%%%%%%%%%%%%%%%%%%%%%%%%
\section{Comparison Approach}
\label{sec:compapp}

%A similar idea has briefly been brought up by Hitz and Kaltofen in \cite{HitzK1995} but they considered using full base extensions instead of just using one more modulus.% They then suggest to perform a comparison by comparing the coefficients

In this section, we present our method for number comparison in RNS. 
To determine whether $N_1 \geq N_2$ or $N_1 < N_2$, we will use an extra modulus that can be any value, large or small, provided that it is prime with $M$. We denote this modulus $m_a$. One extra modulus has also been used by Lu et Chiang in \cite{lu1992novel} with $m_a=2$. However, their design targets only hardware implementation and requires large look-up tables.

Here, we propose a comparison algorithm designed for software implementation of large RNS bases. This targets cryptographic applications.

\subsection{Algorithm}

In the context of cryptography, each operand of, say, modular multiplications, is represented in two different RNS bases $\B$ and $\B'$ \cite{Bajard2001}. This extra modulus $m_a$ can be chosen as one of the moduli of the second base $\B'$, provided that it is prime with each modulus of the first base $\B$. This is convenient because, as we will show, our method will only need one single conversion. In other contexts, this extra modulus can be managed separately.

This leads to the comparison Algorithm \ref{alg:RNScomp}. It takes as inputs two integers $N_1$ and $N_2$ in their RNS representation, respectively $\{n^{(1)}_1,\hdots,n^{(1)}_n\}$ and $\{n^{(2)}_1,\hdots,n^{(2)}_n\}$, and their residue modulo the redundant modulus, that is $n^{(1)}_a = N_1 \bmod m_a$ and $n^{(2)}_a = N_2 \bmod m_a$. The algorithm returns either $N_1\geq N_2$, or $N_1<N_2$. Algorithm \ref{alg:RNScomp} uses a conversion to a Mixed Radix representation, whose choice will be justified section \ref{ssec:BEChoice}. The proof of this algorithm is provided by the Theorem \ref{th:comparison}.

\begin{algorithm}
      \caption{Comparison of two integers in RNS representation, $RNSComp(N_1,N_2)$}
  \label{alg:RNScomp}
  \begin{algorithmic}[1]
    \REQUIRE $\{n^{(1)}_1,\hdots,n^{(1)}_n\}$ and $\{n^{(2)}_1,\hdots,n^{(2)}_n\}$, the RNS representation in base $\B$ of respectively $N_1$ and  $N_2$, $0\leq N_1,N_2<M$, and their residues $n^{(1)}_a = N_1 \bmod m_a$ and $n^{(2)}_a = N_2 \bmod m_a$
    \ENSURE $N_1\geq N_2$, or $N_1<N_2$
    \STATE $\Delta' \gets |n^{(1)}_a-n^{(2)}_a|_{m_a}$
    \STATE $\{z_1,\cdots,z_n\} \gets \{|n^{(1)}_1-n^{(2)}_1|_{m_1},\hdots,|n^{(1)}_n-n^{(2)}_n|_{m_n}\}$\\
    // Mixed Radix conversion, Alg. \ref{alg:MRConv}
    \STATE $\{\delta_1,\hdots,\delta_n\} \gets MRConversion(\{z_1,\cdots,z_n\})$\\
    // Residue $\bmod~m_a$, Alg. \ref{alg:m_aConv}
    \STATE $\Delta \gets to_{m_a}(\{\delta_1,\hdots,\delta_n\})$// $\Delta = |~|N_1-N_2|_M~|_{m_a}$
    \IF{$\Delta =\Delta'$}
        \RETURN $N_1\geq N_2$
    \ELSE
        \RETURN $N_1<N_2$
    \ENDIF
  \end{algorithmic}
\end{algorithm}

\begin{theorem}[Comparison]\label{th:comparison}
Let two RNS numbers $N_1$ and $N_2$ in the base $\B=\{m_1, \cdots, m_n\}$ and a modulus $m_a$ prime with $\B$. If $$N_1-N_2\bmod m_a =((N_1-N_2)\bmod M) \bmod m_a,$$ then $N_1 \geq N_2$, otherwise $N_1 < N_2$.
\end{theorem}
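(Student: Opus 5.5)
The plan is to compute $D=(N_1-N_2)\bmod M$ explicitly in the two possible orderings, using the standing hypothesis $0\leq N_1,N_2<M$ from Algorithm \ref{alg:RNScomp}. Throughout, $x\bmod m$ denotes the least nonnegative residue, and $m_a$ is assumed to satisfy $m_a\geq 2$; this is implicit in its use as a modulus. Since $-M< N_1-N_2< M$, there are only two cases, and the statement then reduces to checking whether a single multiple of $M$ vanishes modulo $m_a$.

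\emph{Case $N_1\geq N_2$.} Here $0\leq N_1-N_2<M$, so $D=N_1-N_2$. Reducing modulo $m_a$ then gives
\[
(N_1-N_2)\bmod m_a = D\bmod m_a
\]
trivially, so the equality in the hypothesis of Theorem \ref{th:comparison} holds.

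\emph{Case $N_1<N_2$.} Here $-M<N_1-N_2<0$, so $D=M+N_1-N_2$, which lies in $]0,M[$. Then
\[
D\bmod m_a=\bigl((N_1-N_2)+M\bigr)\bmod m_a .
\]
This agrees with $(N_1-N_2)\bmod m_a$ exactly when $M\equiv 0 \pmod{m_a}$, that is, when $m_a$ divides $M$. But $m_a$ is prime with every $m_i$, hence $\gcd(m_a,M)=1$, and since $m_a\geq 2$ it cannot divide $M$. So the equality fails.

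The two cases are exhaustive and mutually exclusive. The equality therefore holds if and only if $N_1\geq N_2$, which gives both the ``then'' and the ``otherwise'' parts of the statement. It also justifies the test $\Delta=\Delta'$ in Algorithm \ref{alg:RNScomp}: by construction $\Delta'=(N_1-N_2)\bmod m_a$, and $\Delta$ is the residue modulo $m_a$ of the mixed radix value of $\{z_i\}$. That value lies in $[0,M[$ and is congruent to $N_1-N_2$ modulo each $m_i$, so by the CRT it equals $D$.

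There is no real obstacle in the argument. The only point needing care is the second case, where one must use both coprimality \emph{and} $m_a>1$ to conclude that $M\not\equiv 0\pmod{m_a}$. The identification of $\Delta$ with $D\bmod m_a$ rests on the bound $0\leq X<M$ for mixed radix digits recalled in Section \ref{ssec:BE}.
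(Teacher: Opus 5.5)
Your proof is correct and follows the same route as the paper. You split on the sign of $N_1-N_2$, identify $(N_1-N_2)\bmod M$ as either $N_1-N_2$ or $N_1-N_2+M$, and conclude from $M\not\equiv 0 \pmod{m_a}$. Your version is slightly tighter in two places. You justify that non-congruence explicitly via $\gcd(m_a,M)=1$ and $m_a\geq 2$, which the paper leaves implicit. You also get the converse directly from the two cases being exhaustive, rather than through the paper's separate ``reciprocally'' argument.
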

\begin{proof}

%First, we assume that $X-Y \notequiv 0 \bmod m_a $. In this case :

In one hand, we compute $\Delta_{\B} \gets N_1-N_2$ in base $\B$, i.e. $\Delta_{\B} = (N_1-N_2) \bmod M$. From this value, we compute $\Delta_{\B} \bmod m_a$. It can be computed as a small base extension of only 1 modulus $m_a$. Thus:
	$$\Delta = ((N_1-N_2)\bmod M) \bmod m_a$$
In the other hand,  we use $n^{(1)}_{m_a} \gets N_1\bmod m_a$ and $n^{(2)}_{m_a}\gets N_2 \bmod m_a$, in order to obtain:
$$\Delta' \gets (n^{(1)}_{m_a}-n^{(2)}_{m_a}) \bmod m_a$$
\begin{itemize}
	
	\item If $N_1 \geq N_2$, then:
    
   % $\Delta = \Delta'$.
	
	%Indeed, in such case :
	
	$$\begin{array}{rcl}\Delta &=& ((N_1-N_2)\bmod M) \bmod m_a\\
							&=& (N_1-N_2) \bmod m_a\\
							&=& (n^{(1)}_{m_a}-n^{(2)}_{m_a}) \bmod m_a\\
							&=& \Delta'
	\end{array}$$
	
	\item Otherwise, if $N_1<N_2$ then $N_1-N_2 <0$ and $(N_1-N_2)\bmod M = N_1-N_2+M$. Therefore, we have :
	
	$$\begin{array}{rcl}\Delta &=& ((N_1-N_2)\bmod M) \bmod m_a\\
							&=& (M+ N_1- N_2) \bmod m_a\\
							&=& ((M\bmod m_a) + n^{(1)}_{m_a}-n^{(2)}_{m_a}) \bmod m_a\\
							&\neq& \Delta'
	\end{array}$$
\end{itemize}

Reciprocally, if $\Delta \neq \Delta'$, then:
$$((N_1-N_2)\bmod M) \bmod m_a \neq (N_1 - N_2) \bmod m_a$$
and there exists an integer $k \neq 0$ so that 
$$((N_1-N_2)\bmod M) = N_1-N_2 +k.M$$
with $k=1$, because $0< (N_1-N_2)\bmod M < M$. Thus:
$$0<N_1-N_2 +M<M$$
and
$$N_1-N_2<0$$
Therefore, $N_1 < N_2$, otherwise $N_1 \geq N_2$.

%there exists $N_1 \geq N_2$, then 
%The reciprocal can be proved with boolean arguments. Let's take  {\color{red} L'argument ne me convaint pas vraiment}
%$$
%\left\{
%    \begin{array}{ll}
%        A : N_1 \geq N_2\\
%        B : \Delta = \Delta'\\
%    \end{array}
%\right.
%\implies
%\left\{
%    \begin{array}{ll}
%        \overline{A} : N_1<N_2\\
%        \overline{B} : \Delta \neq \Delta'\\
%    \end{array}
%\right.
%$$
%
%
%We have 
%$$
%\left\{
%    \begin{array}{ll}
%        A \implies B\\
%        \overline{A} \implies \overline{B}
%    \end{array}
%\right.
%$$

%So
%$$
%A \bigwedge \overline{B} \text{ impossible } \implies
%\left\{
%    \begin{array}{ll}
%        B \implies A\\
%        \overline{B} \implies \overline{A}
%    \end{array}
%\right.
%$$

%Hence,
%$$
%    \begin{array}{ll}
%        \Delta = \Delta' \Leftrightarrow N_1 \geq N_2\\
%        \Delta \neq \Delta' \Leftrightarrow N_1<N_2\\
%    \end{array}
%$$
\end{proof}

%{\color{red}Algorithm}
%We now explain our method to compare $N_1$ and $N_2$ both represented in RNS base $\B$ by $\{n^{(1)}_1,\hdots,n^{(1)}_n\}$ and $\{n^{(2)}_1,\cdots,n^{(2)}_n\}$ respectively. We know 

%\begin{itemize}

%	\item We compute $\Delta \gets N_1-N_2$ in base $\B$, i.e. $\Delta = (N_1-N_2) \bmod M$
    
%	\item We compute $\Delta \bmod m_a$ (as a small base extension of only 1 modulus $m_a$). So we get
%	$$\Delta = ((N_1-N_2)\bmod M) \bmod m_a$$

%	\item We use $n^{(1)}_{m_a} \gets N_1\bmod m_a$ and $n^{(2)}_{m_a}\gets N_2 \bmod m_a$, using the same process
	
%	\item We can then compute :
	
%	$$\Delta' \gets (n^{(1)}_{m_a}-n^{(2)}_{m_a}) \bmod m_a$$
	
%	\item If $N_1 \geq N_2$, then $\Delta = \Delta'$.
%    \item 
%\end{itemize}

\begin{remark}
    
Our method also works in the special cases where $N_1-N_2 \equiv 0 \bmod m_a $.

Indeed, in this case, one has $N_1 \bmod m_a \equiv N_2 \bmod m_a $, i.e. $n^{(1)}_{m_a} = n^{(2)}_{m_a}$ and one has always 

$$\Delta' \gets (n^{(1)}_{m_a}-n^{(2)}_{m_a}) \bmod m_a = 0$$

\begin{enumerate}
    \item if $N_1\geq N_2$, one has $N_1-N_2 = k\times m_a$, with $k$ a positive constant. Thus $\Delta = 0 = \Delta'$
    \item if $N_1 < N_2$, one has $N_1-N_2 = - k\times m_a \equiv M - k\times m_a \bmod M$, with $k$ a positive constant.
    Thus $\Delta = M \bmod m_a \neq \Delta'$, since $\Delta' = 0$.
\end{enumerate}
\end{remark}

\subsection{Choice of base extension}
\label{ssec:BEChoice}

Our comparison method works, provided we get the correct value $\bmod~m_a$. That is 

$$\forall X, 0 \leq X \leq M \\$$
and
$$ \{x_1,\hdots, x_n\}$$
the set of residues in the RNS base $\B = \{m_1,\hdots, m_n\}$, we need to compute $x_{m_a} = X \bmod m_a$

To achieve this, we use a base extension from the base $\B$ into the modulus $m_a$.

%\medskip

%None of these methods can be used for our purpose.

\subsubsection*{Choice of the method and algorithm}

Taking into account what has been said in Section \ref{ssec:BE}, and since it is mandatory to have the exact value $\bmod M$ (i.e. the exact value of $k$ in the computation of $N_1-N_2 \bmod M$, see equation \ref{eq:CRT}), the chosen base extension method is based on the \emph{Mixed Radix} conversion. We now present the generic algorithms for our comparison approach.

\paragraph{Mixed Radix conversion}

We provide the Mixed Radix conversion method Algorithm \ref{alg:MRConv}, as presented by Szabo and Tanaka\cite{szabo-tanaka67}. This algorithm provides the mixed radix representation of the number $X$ from its residues in the RNS base $\B$.

\begin{algorithm}
      \caption{Mixed Radix conversion, $MRConversion(\{x_1,\cdots,x_n\})$}
  \label{alg:MRConv}
  \begin{algorithmic}[1]
    \REQUIRE  $\{x_1,\hdots,x_n\}$  in RNS base $\B$ representing $X$, $0\leq X<M$, precomputed values $m_{j,i}^{-1} \bmod m_i$
    \ENSURE $\{a_1,\cdots,a_n\}$ as defined eq. \ref{eq:MRConv}
    \FOR{$i$ from $1$ to $n$}
        \STATE $a_i \gets x_i$
    \ENDFOR
    \FOR{$i$ from $2$ to $n$}
        \FOR{$j$ from $1$ to $i-1$}
            \STATE $a_i \gets (a_i-a_j)\cdot m_{j,i}^{-1} \bmod m_i$
        \ENDFOR
    \ENDFOR
    
    \RETURN $\{a_1,\cdots,a_n\}$
   
  \end{algorithmic}

\end{algorithm}

The complexity of this approach is straightforward: one performs $n\cdot(n-1)/2$ modular multiplications. In terms of memory, the need is $n\cdot(n-1)/2$ word constants, corresponding to the values $m_{j,i}^{-1} \bmod m_i$.
One may notice that the inner loop in algorithm \ref{alg:MRConv} can be parallelized, leading to an execution time in $\mathcal O(n)$.

We do not present here other methods, though we already mentioned Huang in \cite{Huang83}, whose approach runs in $\mathcal O(\log(n))$ time, with the same complexity. Another method could be the one presented by Wang \emph{et al.} in \cite{wang1999new}, with the same complexity properties.

\paragraph{Conversion to the redundant modulus}

We now need to know the residue $x_{m_a} \gets X \bmod m_a$. This is Algorithm \ref{alg:m_aConv}.
In this algorithm, we precompute the following values, for $i=2$ to $n$ :

$$\beta_i \gets \left|\prod_{j=1}^{i-1} m_i\right|_{m_a}$$

\begin{algorithm}
      \caption{From Mixed Radix to the residue $\bmod m_a$, $to_{m_a}(\{a_1,\cdots,a_n\})$}
  \label{alg:m_aConv}
  \begin{algorithmic}[1]
    \REQUIRE  $\{a_1,\hdots,a_n\}$  the mixed radix representation of $X$, $0\leq X<M$, precomputed values $\beta_i$
    \ENSURE $x_{m_a} = X \bmod m_a$
    \STATE $x_{m_a} \gets a_1 \bmod m_a$
    \FOR{$i$ from $2$ to $n$}
        \STATE $x_{m_a} \gets (x_{m_a} + a_i\cdot \beta_i) \bmod m_a$
    \ENDFOR
    \RETURN $x_{m_a}$
   
  \end{algorithmic}

\end{algorithm}

The instruction count is as follows: $n$ word size modular multiplications and $n-1$ word size modular additions. The memory cost is the storage of the $(n-1)$ $\beta_i$ constants, for $2\leq i\leq n$.

Again, parallelization techniques allow an execution time in $\mathcal O(\log(n))$.

%\paragraph{Comparison algorithm}

\subsection{Complexity}

%We need an extra modulus. In case of modular computations, this modulus can be a wordize $w$ modulus, which renders the probability that $X-Y \equiv 0 \bmod m_a $ unlikely, i.e. $\mathcal P(X-Y \equiv 0 \bmod m_a) \approx \frac 1{2^w}$

We provide Table \ref{tab:comp2} the operation count and the time complexities, versus a sequential or parallel implementation. We compare our work with the the classic approach, requiring two mixed-radix conversions (see Flores \cite{szabo1969}), which compares two numbers in the same conditions as ours: no special form base or moduli, no bounds on the numbers.

The memory cost is $\frac{n(n-1)}{2}$ while in this work it is $\frac{(n+2)(n-1)}{2}$.

\begin{table}[htbp]
    \centering
    \begin{tabular}{|c|c|c|c|}
        \hline
         &  \multirow{2}{1.2cm}{Op. Count}   &\multicolumn{2}{c|}{Time Comp.} \\
         \cline{3-4}
         &&seq. &   par.\\
         \hline
         \hline
         \multirow{2}{6mm}{\bf This work}&  $(\frac{n\cdot(n-1)}{2} +n) \mathcal M$&\multirow{2}{7mm}{$\mathcal O(n^2)$}&\multirow{2}{1.5cm}{$\mathcal O(\log(n))$}\\
         &$+ 2n \mathcal A$&&\\
         \hline
         \hline
         \multirow{2}{14mm}{Classic Appr.\cite{szabo1969}}&  $({n\cdot(n-1)}) \mathcal M$&\multirow{2}{7mm}{$\mathcal O(n^2)$}&\multirow{2}{1.5cm}{$\mathcal O(\log(n))$}\\
         &$+ n \mathcal C$&&\\
         \hline
    \end{tabular}
    \medskip
    \caption{Complexity of the comparison algorithm, $\mathcal M$ = word size modular multiplication, $\mathcal A$ = word size addition, $\mathcal C$ = word size comparison}
    \label{tab:comp2}
\end{table}

A comparison with other approaches is non-trivial since either the proposed methods are specific for special form of RNS base or requiring conditions on the input data (bounds, etc.) Furthermore, previous works often present complexities in the background of hardware implementation, which is not our target.

%%%%%%%%%%%%%%%%%%%%%%%%%%%%%%%%%%%%%%%%%%%
%\section{Software implementations}

%%%%%%%%%%%%%%%%%%%%%%%%%%%%%%%%%%%%%%%%%%%
\section{Conclusion}

In this work, we have proposed a method for number comparison in RNS, in the most generic context possible, requiring one redundant modulus which is often readily available in most contexts. In this method, the RNS base can be chosen arbitrarily, without conditions on the moduli form or moduli number. In addition, there are no requirements on the input numbers that can be taken in the whole RNS dynamic range. This method achieves the comparison with one single Mixed Radix conversion, with $\mathcal O(n^2)$ and can run in $\mathcal O(\log(n))$ time, in case of parallel implementation, with $\mathcal O(n^2)$ word constant memorization. To the best of our knowledge, this is competitive in our context.

\paragraph*{Future Work}

We intend to extend this work to improve division and scaling algorithm in the context of software implementations taking advantage of SIMD instruction sets, for application in various use cases (homomorphic encryption, post-quantum schemes...). For all these approaches, software implementations will attempt to take advantage of the SIMD feature and optimization with various ranges of parameters, in particular moduli number and size.

\bibliographystyle{plain}
\bibliography{biblio.bib}

\end{document}